\let\es\varnothing
\DeclareMathOperator{\AT}{\mathsf{AT}}
\DeclareMathOperator{\NP}{\mathsf{NP}}
\DeclareMathOperator{\Max}{\mathsf{Max}}
\DeclareMathOperator{\Min}{\mathsf{Min}}
\DeclareMathOperator{\Opt}{\mathsf{Opt}}
\DeclareMathOperator{\ess}{\mathsf{ess}}
\newtheorem*{notation}{Notation}
\begin{document}

\title{{\sc Flood-it on $\AT$-Free Graphs}\\
{\small A body of clay, a mind full of play, a moment's life -- that's me.}\thanks{Harivansh Rai Bachchan}}

\author{Wing-Kai Hon\inst{1} 
\and 
Ton Kloks\inst{2}
\and 
Fu-Hong Liu\inst{1}
\and 
Hsiang-Hsuan Liu\inst{1}
\and
Hung-Lung Wang\inst{3}}

\institute{National Tsing Hua University\\
Hsinchu, 
Taiwan\\
\email{(wkhon,fhliu,hhliu)@cs.nthu.edu.tw}
\and
National Taipei University of Business\\
Taipei, 
Taiwan\\
\email{hlwang@ntub.edu.tw}}

\maketitle              

\begin{abstract}
Solitaire {\sc Flood-it}, or {\sc Honey-Bee}, is a game played 
on a colored graph. The player resides in a source vertex. Originally 
his territory is the maximal connected, monochromatic subgraph that 
contains the source. A move consists of 
calling a color. This conquers all the 
nodes of the graph that can be reached by a monochromatic path of that color   
from the current territory of the player. It is the aim of the player 
to add all vertices to his territory in a minimal number of moves. 
We show that the minimal 
number of moves can be computed in polynomial time when the 
game is played on $\AT$-free graphs. 
\end{abstract}

\section{Introduction}
As Oscar Wilde already observed, `Life is far too important a thing ever to 
talk seriously about,' so,  today,  let's play! 

\bigskip 

{\sc Flood-it} is a popular game which can be played solitaire or 
together with other people or with machines.  We define the solitaire 
game in terms of a graph 
as follows. The input is a graph $G$ of which the vertices are colored. 
Let 
\[c:V(G) \rightarrow C \quad \text{and}\quad C=\{\;1,\;\dots,\;k\;\}\]  
denote the coloring of the vertices of $G$ with colors from a set $C$ of $k$ colors. 
Originally, the player ocupies one vertex, say $x_0$, 
and his `territory' consists of $x_0$ plus all the vertices that can be 
reached from $x_0$ by a monochromatic path of 
vertices of color $c(x_0)$.  A move consists of the player's calling 
of a color, say $i$, which, one may assume, 
is not the current color of his territory. 
The result 
of the move is that the territory of the player is recolored with the 
color $i$ and that those vertices are added to it that can be reached from 
$x_0$ by a monochromatic path of color $i$. 
The aim of the player is to increase 
his territory to $V(G)$ in as few moves 
as possible. 

\bigskip 

In another version, called {\sc Free-Flood-it}, the player may 
grow his territory from different starting points at any move. 
In a 2-player version, two players play against each other. The player 
who grows his territory to at least half of the vertices wins the game. 

\bigskip 

In this paper we confine ourselves to the analysis 
of the solitaire game described above.  In the remainder of this 
introduction we briefly mention some of the known complexity results. 
The problem is polynomial for paths, cycles, and cocomparability 
graphs. It is 
$\NP$-complete for splitgraphs and on trees, even when the 
number of colors is restricted to 3. Furthermore, the problem 
remains $\NP$-complete on $3 \times n$ boards, even when the 
number of colors is only 4. 
The free version is $\NP$-complete 
on trees and even on caterpillars.  When parameterized by the number 
of colors, the free problem becomes tractable again for interval 
graphs and for splitgraphs. 
For $2 \times n$ boards the free problem is 
fixed-parameter tractable, when parameterized by the number of colors. 
The problem remains $\NP$-complete on such boards when the number 
of colors is unbounded. 
Notice that 
the solitaire game is trivial in case there 
are only two colors. 

\section{Preliminaries on $\AT$-free graphs}

Asteroidal sets were introduced by Walter in 1978 to characterize 
certain subclasses of chordal graphs. They were rediscovered and 
put to use in~\cite{kn:broersma,kn:kloks2}. 
The elements of asteroidal sets of cardinality 3 are called asteroidal 
triples. They were introduced by Lekkerkerker and 
Boland in their pioneering paper on the 
characterization of interval graphs~\cite{kn:lekkerkerker}. 
To be precise, the authors of the underlying work 
characterize interval graphs as those chordal 
graphs without asteroidal triples. For a somewhat different proof, 
and an extension to the infinite, we refer to~\cite{kn:halin}. 
See also~\cite{kn:kloks} for another description 
of the proof and for definitions of basic concepts that we assume here 
familiarity with. 

\begin{definition}
In a graph, an asteroidal triple is an independent set of three 
vertices such that every pair of them is connected by a path 
that avoids the closed neighborhood of the third. 
\end{definition}

In an attempt to find an asteroidal triple in a graph one can examine every 
triple; remove, in turn, the closed neighborhood of a triple's element, 
and check whether 
the remaining pair is contained in one component of the truncated graph. 
To check if a graph is $\AT$-free this is, at the moment, basically 
the best known method, since it can be shown that finding 
an asteroidal triple in a graph is at least as hard as finding a 
triangle in a graph (see, eg,~\cite{kn:kohler}). 

\bigskip 

Graphs without asteroidal triples, that is, $\AT$-free graphs, 
generalize cocomparability graphs in a natural way.   
$\AT$-free graphs properly 
contain cocomparability graphs and these, in turn,  
contain valued 
classes such as interval graphs and permutation graph. 
Notice however, that the class of 
$\AT$-free graphs differs from the smaller ones by the 
fact that its elements are, on the whole, not perfect. 
Recall that a graph is the complement of a comparability graph 
if it has an intersection model in which each vertex $x$ is represented 
by a continuous function $f_x: [0,1] \rightarrow \mathbb{R}$. Two 
vertices are adjacent in the graph if their functions 
intersect~\cite{kn:golumbic,kn:kloks}.   To see that they are 
indeed $\AT$-free, observe that for any three, pairwise non-intersecting,  
continuous functions, one must lie between the other two. Then every path 
that runs between the outer pair, must have a vertex in the  
closed neighborhood of the one in the middle. 

\bigskip 

Another property satisfied by cocomparability graphs is that the class is 
closed under edge contractions. 

\begin{definition}
Let $G$ be a graph and let $\{x,y\} \in E(G)$. A contraction 
of the edge $\{x,y\}$ replaces the pair by a single, new vertex. 
The neighborhood of the new vertex is the union of the neighborhoods 
of the endpoints of the edge,  
with the omission of the deleted endpoints $x$ and $y$: 
\[N(x) \cup N(y) \setminus \{\;x,\;y\;\}.\]
\end{definition}

Edge contractions play an important role in the theory of graph minors. 
A minor of a graph $G$ is a graph obtained from $G$ by a series 
of edge- and vertex deletions and edge contractions. 
Obviously, the class of cocomparability graphs is not closed 
under taking minors, since that would imply the erroneous 
conclusion that {\em all\/} graphs are 
cocomparability, because all graphs are obtainable from a large 
enough clique 
by taking subgraphs. 

\bigskip 

To see that cocomparability graphs are closed under edge contractions, 
consider a function model, as described above. To contract and edge 
$\{x,y\}$, replace the functions $f_x$ and $f_y$ by one new function 
which swiftly zig-zags between the two functions $f_x$ and $f_y$. 
Clearly, the new function serves as a contraction since any vertex 
intersects $f_x$ or $f_y$ if and only if it intersects the new function 
(see, eg,~\cite{kn:fleischer}). 

\bigskip 

We show that the class of $\AT$-free graphs is closed under edge contractions. 
Actually, even the larger class of `hereditary dominating pair graphs' 
has the property~\cite{kn:przulj}. 

\begin{lemma}
The class of $\AT$-free graphs is closed under edge contractions. 
\end{lemma}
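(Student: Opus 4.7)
The strategy is to prove the contrapositive: starting from an asteroidal triple $\{a,b,c\}$ in the contracted graph $G' := G/\{x,y\}$, I will extract an asteroidal triple in $G$. Let $z$ denote the vertex of $G'$ created by identifying the endpoints $x$ and $y$.

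The technical tool is a path-lifting principle. Given a path $P$ in $G'$ with endpoints in $V(G)\setminus\{x,y\}\subseteq V(G')\setminus\{z\}$, one obtains a walk in $G$ between the same endpoints (hence, by shortcutting, a path) by replacing each occurrence of $z$ on $P$ by $x$, by $y$, or by the edge $xy$, according to which of $x,y$ the neighbors of $z$ on $P$ are adjacent to in $G$. The crucial observation is that if $P$ avoids the closed neighborhood in $G'$ of some vertex $v\neq z$, then $v$ is not adjacent in $G'$ to $z$, so $v$ is not adjacent in $G$ to $x$ or $y$; hence the lifted walk still avoids $N_G[v]$.

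With this tool I split into two cases. If $z\notin\{a,b,c\}$, then $a,b,c$ are already vertices of $G$, pairwise nonadjacent there (an edge would survive the contraction), and the three avoidance paths lift directly because each avoids $N_{G'}[\cdot]$ of a vertex different from $z$; thus $\{a,b,c\}$ is asteroidal in $G$. If $z$ lies in the triple, say $c=z$, the claim is that $\{a,b,x\}$ is asteroidal in $G$. Independence follows from $a,b\not\sim z$ in $G'$, which forces $a,b\not\sim x$ in $G$, together with $a\not\sim b$ in $G$. The $ab$-path avoiding $N_{G'}[z]$ never enters $\{x,y\}$ nor any neighbor of $x$ or $y$, so it already certifies an $ab$-connection avoiding $N_G[x]$. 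The $az$-path avoiding $N_{G'}[b]$ lifts to an $a$-to-$\{x,y\}$ walk in $G$ avoiding $N_G[b]$; if this walk ends at $y$, I prolong by the edge $yx$, which is legal because $b\not\sim z$ in $G'$ guarantees $x,y\notin N_G[b]$. Symmetrically for the $bz$-path. The three resulting paths witness that $\{a,b,x\}$ is asteroidal in $G$.

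The main obstacle is precisely the final extension step in the second case: one must check that replacing $z$ by $x$, possibly after an extra step along the edge $xy$, never injects a vertex into a forbidden closed neighborhood. This is exactly what the hypothesis $v\neq z$ in the path-lifting principle secures, and is the place where the argument would break if one tried to contract a non-edge rather than an edge.
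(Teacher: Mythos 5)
Your proof is correct and follows essentially the same route as the paper: assume an asteroidal triple in the contracted graph, lift the avoidance paths back to $G$ (possibly routing through the edge $\{x,y\}$), and observe that when the contracted vertex lies in the triple one may replace it by $x$ to obtain an asteroidal triple in $G$. You merely spell out in detail (the path-lifting principle, the two cases, the choice of $\{a,b,x\}$) what the paper's terse argument leaves implicit.
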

\begin{proof}
Let $G$ be $\AT$-free and let $H$ be obtained from $G$ by contracting 
an edge $\{x,y\} \in E(G)$.  Assume $H$ has an $\AT$, say $\{p,q,r\}$. 
Any path between $p$ and $q$ in $H-N[r]$ corresponds with a path 
in $G$, possibly containing the edge $\{x,y\}$. If $r$ is the contracted 
vertex, 
then the path avoids the neighborhood of both $x$ and $y$ in $G$. 
This shows that $G$ must also contain an $\AT$, which is a contradiction. 
\end{proof}

\bigskip 

Consider contracting every connected, monochromatic subgraph to a single  
vertex. Then, according to 
the lemma above, the resulting graph is still $\AT$-free, 
and, furthermore, the outcome 
is properly colored, that is, each set of colors induces an independent set. 
By the definition of the game, the minimal number 
of moves needed to conquer the graph is the same in both graphs. This 
proves the following corollary. 

\begin{corollary}
The complexity of  
solitaire {\sc Flood-it} on $\AT$-free graphs is equivalent to 
that of the game played on $\AT$-free graphs with a proper vertex coloring, 
that is, where each color class induces an independent set. 
\end{corollary}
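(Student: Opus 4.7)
The plan is to make precise the reduction sketched in the preceding paragraph. Given an instance $(G,c,x_0)$ of solitaire \textsc{Flood-it} on an $\AT$-free graph, form $G'$ by repeatedly contracting any edge whose two endpoints share a color, until no such edge remains; write $c'$ for the induced coloring and $x_0'$ for the super-vertex containing $x_0$. Three structural claims must be verified: that $G'$ is $\AT$-free, that $c'$ is proper, and that the minimum number of moves needed to conquer $G$ from $x_0$ equals the minimum number needed to conquer $G'$ from $x_0'$.

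$\AT$-freeness of $G'$ follows by iterating the preceding lemma, and properness is immediate from the termination condition. It is also convenient to note that the construction is independent of the order of contractions: the vertex set of $G'$ is in natural bijection with the collection of maximal monochromatic connected subgraphs of $G$, and two super-vertices are adjacent in $G'$ iff there is an edge of $G$ between the two subgraphs they represent. This also makes clear that $G'$ is computable in polynomial time, for instance by running one breadth-first search per color class to identify the monochromatic components.

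The main content is the equality of optimal values, and I would prove it by exhibiting a bijection between play sequences. A play is just a sequence of colors $i_1,\dots,i_m$, and the territory after $t$ moves is a deterministic function of that sequence. By induction on $t$, the territory in $G$ after playing $i_1,\dots,i_t$ is exactly the union of those maximal monochromatic subgraphs of $G$ whose super-vertices lie in the territory in $G'$ after playing the same sequence. The inductive step reduces to the observation that each maximal monochromatic subgraph is, at every moment, either entirely inside or entirely outside the current territory, so a monochromatic $i_t$-colored path witnessing a new vertex entering the territory can be projected from $G$ down to $G'$ and lifted from $G'$ up to $G$.

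The one subtlety where care is needed is the lifting direction. An $i_t$-monochromatic path in $G'$ corresponds in $G$ to a walk that alternates between edges crossing super-vertex boundaries and short traversals inside individual monochromatic components; but each such internal traversal is itself $i_t$-monochromatic, because the ambient super-vertex has color $i_t$. Consequently, every super-vertex added to the territory in $G'$ corresponds to adding an entire monochromatic component in $G$, and the bijection on play sequences is exact. This yields a polynomial-time reduction from \textsc{Flood-it} on $\AT$-free graphs to \textsc{Flood-it} on properly colored $\AT$-free graphs; the reverse direction is trivial, since the latter is a restriction of the former.
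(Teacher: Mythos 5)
Your proposal is correct and follows essentially the same route as the paper: contract each maximal monochromatic connected subgraph to a single vertex, invoke the lemma that $\AT$-free graphs are closed under edge contractions, and observe that the properly colored quotient has the same optimal number of moves. The only difference is that you spell out the equality of optimal values via the territory/play-sequence correspondence, which the paper simply asserts ``by the definition of the game''; this added detail is sound and does not change the argument.
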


\bigskip 

$\AT$-free graphs decompose quite gracefully into blocks and intervals. 

\begin{definition}
Let $G$ be a graph and let $x \in V(G)$. A block at $x$ is a 
component of $G-N[x]$. 
\end{definition}

\begin{definition}
Let $G$ be a graph and let $x$ and $y$ be nonadjacent vertices in $G$. 
A vertex $z$ is between $x$ and $y$ if $x$ and $z$ are contained in a 
common component of $G-N[y]$ and $y$ and $z$ are contained in a common 
component of $G-N[x]$. 
The interval between $x$ and $y$ is the set of all vertices that are 
between $x$ and $y$. 
\end{definition}

\bigskip 

The close relationship between interval graphs 
and $\AT$-free graphs is illustrated by the fact that 
a graph is $\AT$-free if and only if every 
minimal triangulation is an interval graph~\cite{kn:kloks3}. 
This is demonstrated by the following two decomposition theorems, which,   
incidentally, lie at the heart of the 
algorithm that computes the independence number in $\AT$-free graphs~\cite{kn:broersma2}. 

\begin{theorem}
\label{decomposition 1}
Let $G$ be $\AT$-free and let $I(x,y)$ be an nonempty interval 
between two nonadjacent vertices $x$ and $y$. Then, for any vertex 
$z \in I(x,y)$, the removal of $N[z]$ partitions $I(x,y)-N[z]$ 
into two intervals 
$I(x,z)$ and $I(z,y)$ and a collection of blocks at $z$. 
\end{theorem}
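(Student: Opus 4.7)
The plan is to use $\AT$-freeness as a separation tool. First I would observe that $\{x,y,z\}$ is an independent triple: the assumption $z\in I(x,y)$ places $z$ outside $N[x]\cup N[y]$, and $x,y$ are nonadjacent by hypothesis. The definition of the interval also gives paths $x\to z$ in $G-N[y]$ and $y\to z$ in $G-N[x]$, so $\AT$-freeness forces the third candidate avoiding path, from $x$ to $y$ in $G-N[z]$, to be absent. Consequently $x$ and $y$ lie in distinct components $B_x$ and $B_y$ of $G-N[z]$, which are two of the blocks at $z$.

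Next, for $w\in I(x,y)-N[z]$ the vertex $w$ sits in some block at $z$. If $w\in B_x$, I would check the defining conditions for $w\in I(x,z)$: $w\notin N[x]\cup N[z]$ is direct; $w$ reaches $x$ inside $B_x\subseteq G-N[z]$; and $w$ reaches $z$ in $G-N[x]$ by concatenating the path $w\to y$ in $G-N[x]$ (given by $w\in I(x,y)$) with the path $y\to z$ in $G-N[x]$ (given by $z\in I(x,y)$). The case $w\in B_y$ is symmetric and yields $w\in I(z,y)$; otherwise $w$ lies in a third block at $z$, producing the remaining summand of the claimed partition.

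For the converse inclusion $I(x,z)\subseteq I(x,y)-N[z]$, let $w\in I(x,z)$. Then $w\notin N[z]$ and $w\in B_x$ are immediate, a path $w\to y$ in $G-N[x]$ is built by the same concatenation, and $w\notin N[y]$ is forced, since otherwise $w\in B_x$ and $y\in B_y$ would share a common component of $G-N[z]$. What remains, and what I expect to be the main obstacle, is to exhibit a path $w\to x$ in $G-N[y]$. My plan is to argue by contradiction using $\AT$-freeness on two triples. The independent triple $\{w,x,z\}$ admits the paths $w\to x$ in $G-N[z]$ and $w\to z$ in $G-N[x]$ supplied by $w\in I(x,z)$, so it cannot be asteroidal; hence $N[w]$ separates $x$ from $z$. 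If the desired path $w\to x$ in $G-N[y]$ did not exist, then $\{w,x,y\}$ would be an independent triple with $w\to y$ in $G-N[x]$ present, and supplying a path $x\to y$ in $G-N[w]$ would violate $\AT$-freeness, finishing the argument. I would assemble such a path from the given $x\to z$ in $G-N[y]$ and $z\to y$ in $G-N[x]$, using the $\{w,x,z\}$ separation to pin $y$ onto the $x$-side of $G-N[w]$; this final step, which juggles three distinct closed neighborhoods at once, is the technical heart of the proof.
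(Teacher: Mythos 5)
The first two thirds of your argument are sound and essentially what is needed: $\{x,y,z\}$ is independent, the two paths supplied by $z\in I(x,y)$ force (by $\AT$-freeness) that $N[z]$ separates $x$ from $y$, the forward inclusion of $I(x,y)-N[z]$ into $I(x,z)\cup I(z,y)\cup(\text{other blocks at }z)$ follows by concatenating paths through $y$ resp.\ $x$, and your derivations of $w\notin N[y]$ and of the $w$--$y$ connection in $G-N[x]$ for $w\in I(x,z)$ are correct. (Two smaller omissions: to get a partition into \emph{whole} blocks you should note that a block $B$ at $z$ other than the ones containing $x$ and $y$ meets neither $N[x]$ nor $N[y]$ -- any such vertex of $B$ would merge $B$ with the block of $x$ or of $y$ -- so $B$ stays connected in $G-N[x]$ and in $G-N[y]$ and is therefore wholly inside or wholly outside $I(x,y)$; and the disjointness of the pieces, which follows since $I(x,z)$ and $I(z,y)$ lie in distinct blocks at $z$.)

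The genuine gap is your final step, and it fails for a structural reason: $\AT$-freeness can only be used to \emph{deny} the existence of one of the three avoiding paths of an independent triple, never to \emph{force} a path to exist. Assuming no $w$--$x$ path in $G-N[y]$ and then exhibiting a $w$--$y$ path in $G-N[x]$ together with an $x$--$y$ path in $G-N[w]$ contradicts nothing: the triple $\{w,x,y\}$ is still not asteroidal, precisely because the path whose existence you are trying to establish is the missing third one. So the planned contradiction cannot materialize, no matter how the $x$--$y$ path in $G-N[w]$ is assembled. The repair is short and uses exactly the ingredients you already have, but directly rather than by contradiction: from the triple $\{w,x,z\}$ you correctly concluded that $N[w]$ separates $x$ from $z$ (note $x,z\notin N[w]$). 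Since $z\in I(x,y)$ there is an $x$--$z$ path $Q$ in $G-N[y]$; by the separation, $Q$ meets $N[w]$ at some vertex $q$, and $q\notin N[y]$ because $Q$ avoids $N[y]$. The subpath of $Q$ from $x$ to $q$, followed by the edge $qw$ (or stopping at $w$ if $q=w$), is a $w$--$x$ path avoiding $N[y]$, which is the connection you need; the symmetric argument handles $I(z,y)$. For the record, the paper itself states this decomposition theorem without proof, citing the literature, so there is no in-paper argument to compare yours against.
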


\begin{theorem}
\label{decomposition 2}
Let $G$ be $\AT$-free. Let $B$ be a block at a vertex $x$. 
Then, for any vertex $y \in B$, the removal of $N[y]$ partitions 
$B-N[y]$ into 
an interval $I(x,y)$ and some blocks at $y$. 
\end{theorem}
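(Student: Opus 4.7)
My plan is to analyze the components of $B - N[y]$ one at a time, setting up a dichotomy: either such a component $C$ reaches $x$ inside $G - N[y]$, in which case $C \subseteq I(x,y)$, or $C$ does not, in which case I will show that $C$ is itself a component of $G - N[y]$ and hence a block at $y$. Two preliminary observations are needed. First, $I(x,y) \subseteq B - N[y]$: every $z \in I(x,y)$ lies in the component of $G - N[x]$ containing $y$, which is $B$, and by being in a component of $G - N[y]$ it avoids $N[y]$. Second, since each component $C$ of $B - N[y]$ is connected in $G - N[y]$, membership of $C$ in $I(x,y)$ is all-or-nothing: all vertices of $C$ reach $x$ in $G - N[y]$ iff some vertex does.

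The main step is to show that if a component $C$ of $B - N[y]$ contains no vertex reaching $x$ in $G - N[y]$, then $C$ is in fact a full component of $G - N[y]$. Suppose for contradiction there is $v \in C$ with a neighbor $w \in G - N[y]$ lying outside $C$. Since $v \in B$ we have $v \notin N[x]$. If also $w \notin N[x]$, then the edge $vw$ lies entirely in $G - N[x]$, so $w$ belongs to the component of $G - N[x]$ containing $v$, namely $B$; thus $w \in B - N[y]$ and the adjacency $vw$ puts $w$ into the same component of $B - N[y]$ as $v$, contradicting $w \notin C$. Hence $w \in N[x]$. But then $w$ is adjacent to, or equal to, $x$, and $x$ itself lies in $G - N[y]$ because $y \in B$ forces $y \notin N[x]$ and so $x \notin N[y]$. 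The resulting path from $v$ through $w$ to $x$ in $G - N[y]$ shows that $C$ does reach $x$, contradicting the choice of $C$.

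Combining this with the all-or-nothing observation, every component of $B - N[y]$ is either contained in $I(x,y)$ or is a block at $y$. Taking the union over the first kind recovers exactly $I(x,y)$, and the blocks at $y$ produced in this way are disjoint from $I(x,y)$, giving the desired partition of $B - N[y]$. The subtle part of the argument, and the step I expect to be the main obstacle, is the case analysis just described, hinging on the fact that the only way to leave $B$ within $G - N[y]$ is through $N[x]$ (because $B$ is a full component of $G - N[x]$), and any such excursion immediately manufactures an $x$-path. Notably, this set-level partition does not itself require $\AT$-freeness; that hypothesis is what later secures the additional structural statement that the piece $I(x,y)$ is a single connected interval rather than a disjoint collection.
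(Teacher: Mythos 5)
Your argument is correct and complete, but note that the paper itself gives no proof of this theorem: both decomposition theorems are stated with a pointer to the cited work of Broersma, Kloks, Kratsch and M\"uller, so there is nothing internal to compare against. Your route is the natural one and it is sound: since $B$ is a full component of $G-N[x]$, any neighbour of a vertex of $B$ that lies outside $B$ must lie in $N[x]$, and since $x\notin N[y]$ such an excursion yields a $v,w,x$-path avoiding $N[y]$; hence every component of $B-N[y]$ either lies in the component $D$ of $G-N[y]$ containing $x$ (and then, being inside $B$, lies in $B\cap D=I(x,y)$) or has no neighbours at all in $G-N[y]$ outside itself and is therefore a block at $y$. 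Together with $I(x,y)\subseteq B-N[y]$ this gives exactly the claimed partition, and you are right that this set-level statement holds in every graph. Two small remarks: the subcase $w=x$ cannot occur (a vertex of $B$ is nonadjacent to $x$), which only simplifies your path; and your closing guess about the role of $\AT$-freeness is off the mark --- it is not about making $I(x,y)$ connected, but is genuinely needed in the companion Theorem~\ref{decomposition 1}, where one must rule out that $x$ and $y$ lie in a common component of $G-N[z]$ for $z\in I(x,y)$ (otherwise $\{x,y,z\}$ would be an asteroidal triple and the two intervals $I(x,z)$ and $I(z,y)$ could overlap). Neither remark affects the validity of your proof of the statement as given.
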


\bigskip 

We end this section with one more definition; that of an extreme. 

\begin{definition}
Let $G$ be a connected graph. A vertex $x$ is an extreme if the largest component 
of $G-N[x]$ has, among all vertices 
of $G$, the maximal cardinality.  In a disconnected graph a vertex is 
extreme if it is extreme in a component of the graph. 
\end{definition}

So, when $G$ is $P_3$-free, that is, 
when $G$ is a clique, or a disjoint union of cliques, 
then every vertex of $G$ is an extreme. 

\bigskip 

\begin{lemma}
\label{lm extreme}
Let $G$ be a connected graph and let 
$x$ be an extreme in $G$ and let $C$ be the largest component of $G-N[x]$. 
Let $S=N(C)$, that is, $S$ is the set of vertices in $V\setminus C$ that 
have a neighbor in $C$. Let 
\[X=V(G) \setminus (N(C) \cup C).\] 
Then $x \in X$ and every vertex of $X$ is adjacent to every vertex of $S$. 
\end{lemma}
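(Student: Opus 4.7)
The plan is to verify $x\in X$ directly from the definitions, and then establish the adjacency claim by contradiction, exploiting extremality of $x$.

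First I would check that $x\in X$. Since $C$ is a component of $G-N[x]$, no vertex of $C$ lies in $N[x]$; in particular $x\notin C$ and no vertex of $C$ is adjacent to $x$, so $x\notin N(C)$ either. Hence $x\in V(G)\setminus (C\cup N(C))=X$.

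For the main claim, suppose toward a contradiction that there exist $u\in X$ and $s\in S=N(C)$ with $u$ not adjacent to $s$. I would then show that $G-N[u]$ contains a component strictly larger than $C$, violating the assumption that $x$ is extreme. The key observation is that the entire set $C\cup\{s\}$ survives in $G-N[u]$: because $u\in X$, we have $u\notin C$ and $u$ has no neighbor in $C$, so every vertex of $C$ lies in $G-N[u]$; and $s\in S$ is distinct from $u$ (since $S\cap X=\es$) and nonadjacent to $u$ by assumption, so $s$ lies in $G-N[u]$ as well. Since $s$ has, by definition of $S$, a neighbor in $C$, the set $C\cup\{s\}$ is contained in a single connected component of $G-N[u]$. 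That component therefore has at least $|C|+1$ vertices, strictly exceeding the size of the largest component of $G-N[x]$. This contradicts the choice of $x$ as an extreme vertex, and the proof is complete.

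The only mildly delicate step is ensuring that $s$ really does survive the removal of $N[u]$, which requires both that $u\neq s$ (immediate from $X\cap S=\es$) and that $u\not\sim s$ (the assumption to be contradicted); everything else follows mechanically from the definitions of $X$, $S$, and extreme vertex. I do not expect any serious obstacle.
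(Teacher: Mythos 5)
Your proposal is correct and follows essentially the same route as the paper: assume some vertex of $X$ misses a vertex of $S=N(C)$, observe that $C\cup\{s\}$ then lies in a single component of $G-N[u]$ of size at least $|C|+1$, and contradict the extremality of $x$. You merely spell out the details (that $x\in X$, that $C$ and $s$ survive the deletion of $N[u]$, and that $C\cup\{s\}$ is connected) which the paper leaves implicit, so there is nothing to add.
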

\begin{proof}
Assume that some vertex $x^{\prime} \in X$ is not adjacent to 
some vertex $y \in N(C)$. 
This contradicts the assumption that $x$ is extreme, since 
$C \cup \{y\}$ is contained in a component of $G-N[x^{\prime}]$, 
that is, the largest component of $G-N[x^{\prime}]$ is larger than 
the largest component of $G-N[x]$. 
\end{proof}

\begin{corollary}
All vertices of the set $X$ defined as in Lemma~\ref{lm extreme} 
are extreme. 
\end{corollary}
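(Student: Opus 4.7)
The plan is to show that for an arbitrary $x' \in X$, the graph $G-N[x']$ has a component of the same size as $C$; since $x$ is already extreme, $|C|$ is the maximum possible largest-component size over all vertices, so this is enough.

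First I would observe that the very definition $X = V(G)\setminus (N(C)\cup C)$ means no vertex of $X$ lies in $C$ or has a neighbor in $C$. Hence $N[x']\cap C = \es$, so the whole set $C$ survives in $G-N[x']$. Next I would use the conclusion of Lemma~\ref{lm extreme} that every vertex of $X$ is adjacent to every vertex of $S=N(C)$; consequently $S \subseteq N(x') \subseteq N[x']$, so deleting $N[x']$ wipes out the entire outer boundary of $C$.

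Combining these two facts, $C$ is connected, it sits intact inside $G-N[x']$, and every edge that could leave $C$ in $G$ goes to $S$, which has been deleted. Therefore $C$ is exactly a connected component of $G-N[x']$, and in particular the largest component of $G-N[x']$ has cardinality at least $|C|$. Because $x$ is extreme, $|C|$ is the maximum over all vertices $v\in V(G)$ of the size of the largest component of $G-N[v]$, so the largest component of $G-N[x']$ has size at most $|C|$ as well; equality holds, and $x'$ is extreme.

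There is no real obstacle here: the two bullet-point conclusions of Lemma~\ref{lm extreme} (that $X$ is disjoint from $N[C]$, and that $X$ is completely joined to $S$) together do all the work, and the rest is just bookkeeping. The only point requiring any care is to note that the extremeness of $x$ gives the matching upper bound on the largest component of $G-N[x']$, which is what turns ``$\geq |C|$'' into ``extreme.''
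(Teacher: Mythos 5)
Your argument is correct and is exactly the intended one: the paper leaves this corollary unproved as an immediate consequence of Lemma~\ref{lm extreme}, namely that $C$ remains a full component of $G-N[x']$ for every $x'\in X$ because $X$ misses $N[C]$ and is completely joined to $S$, and extremeness of $x$ supplies the matching upper bound. Nothing is missing.
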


\bigskip 

The set $X$ is a module and, when $G$ is $\AT$-free, it induces an, 
otherwise unrestricted, $\AT$-free 
graph. Anyway, $\AT$-free or not,  
in turn $G[X]$ generally contains an extreme. 

\begin{definition}
A global extreme is, 
\begin{enumerate}
\item an element of $X$ when $G[X]$ is a union of cliques, 
or,  
\item defined recursively, as a global extreme of any component of $G[X]$, otherwise. 
\end{enumerate}
\end{definition}

\section{The case where the source vertex is a global  extreme}

Following the stratagem of Fleischer and Woeginger in~\cite{kn:fleischer} 
we start with an analysis of the case where the source vertex, denoted 
by $x_0$, is a global extreme. Throughout this section we assume that the 
graph $G$ is a connected $\AT$-free graph with a proper vertex coloring. 

\begin{lemma}
\label{lm vertex in interval}
Let $x$ and $y$ be two vertices of the same color 
and assume that $y \in I(x,x_0)$ 
Then conquering $x$ simultaneously conquers $y$. By that we mean 
that when $x$ is added to $x_0$'s territory, either $y$ was already in 
that territory or else it gets added in the same round as $x$. 
\end{lemma}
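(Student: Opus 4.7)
The plan is to reduce the claim to $\AT$-freeness of $G$ by producing an asteroidal triple on $\{x_0,x,y\}$ whenever the conclusion fails. First I would unpack what it means to be conquered at round $t$ under a proper coloring. Because each color class is an independent set, any monochromatic path from the current territory $T_{t-1}$ to a new vertex $v$ must step into $T_{t-1}$ already at its first edge, since two same-color vertices are nonadjacent in $G$. Hence $v \notin T_{t-1}$ is added in round $t$ if and only if the call is $c(v)$ and $v$ has a neighbor in $T_{t-1}$. Applied to $x$: the call at round $t$ is $c(x)=c(y)$ and $x$ has a neighbor in $T_{t-1}$. The lemma therefore reduces to showing $y \in T_{t-1}$ or $N(y) \cap T_{t-1} \neq \es$.

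Suppose for contradiction that $N[y] \cap T_{t-1} = \es$. I claim $\{x_0,x,y\}$ is an asteroidal triple in $G$. Independence is immediate from $y \in I(x,x_0)$: $y \notin N[x]$ and $y \notin N[x_0]$ by the two conditions defining ``between'', and $x \not\sim x_0$ because the interval is defined only for nonadjacent pairs. Two of the three required asteroidal paths come for free: the definition of $y \in I(x,x_0)$ supplies a $y$-to-$x_0$ path avoiding $N[x]$ and a $y$-to-$x$ path avoiding $N[x_0]$.

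For the third path, from $x_0$ to $x$ avoiding $N[y]$, I would walk through the territory itself. A routine induction on the round number shows that each $T_s$ is connected and contains $x_0$, since under a proper coloring every newly added vertex is adjacent to the previous territory. Concatenating any path inside $T_{t-1}$ from $x_0$ to a neighbor $u$ of $x$ with the edge $ux$ produces an $x_0$-to-$x$ path whose internal vertices all lie in $T_{t-1}$, hence outside $N[y]$ by the contradiction hypothesis; and $x \notin N[y]$ by independence. Thus $\{x_0,x,y\}$ is asteroidal, contradicting the hypothesis that $G$ is $\AT$-free.

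The one substantive idea is the realization that the territory already furnishes a tailor-made corridor from $x_0$ to $x$, and insisting that $y$ is inaccessible from this corridor is exactly what creates the forbidden asteroidal configuration; everything else is bookkeeping. The degenerate cases $x=x_0$ or $t=0$ are excluded because $y \in I(x,x_0)$ forces $x \neq x_0$ and $x \not\sim x_0$, so $x$ is added at some round $t \geq 1$ and the inductive description of $T_{t-1}$ applies.
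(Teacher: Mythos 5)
Your proof is correct and follows essentially the same route as the paper: in both arguments the heart of the matter is that, by $\AT$-freeness and the between-ness of $y$, the set $N[y]$ must intersect any $x_0$--$x$ corridor created when $x$ is conquered, and equality of colors then pulls $y$ into the territory in the same round. The only difference is presentational: the paper obtains the separation by invoking Theorem~\ref{decomposition 1} and works with the monochromatic conquering path, whereas you unpack that separation into an explicit asteroidal triple $\{x_0,x,y\}$, using the connected, properly colored territory plus the last edge to $x$ as the third path.
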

\begin{proof}
Conquering $x$ implies, by definition, that the territory 
of $x_0$ is colored with color $c(x)$ and that, subsequently,   
there is a $x_0,x$-path with all its 
vertices of  the color $c(x)$. 

\medskip 

\noindent
By Theorem~\ref{decomposition 1}, the neighborhood $N[y]$ separates $x$ and $x_0$ 
into different components of $G-N[y]$.  
Since the set $N[y]$ separates $x_0$ and $x$, 
the $x_0,x$-path goes through a neighbor of $y$. 
Then, calling the color 
$c(x)$ by $x_0$ will add $y$ to the territory 
as well. 
\end{proof}

\bigskip 

\begin{lemma}
\label{lm vertex in block}
Let $x$ and $y$ be two vertices of the same color. 
Assume that $y$ and $x_0$ are in a common block $B$ at $x$. 
Then conquering $x$ simultaneously conquers $y$. 
\end{lemma}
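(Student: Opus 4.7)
The plan is to mirror the proof of Lemma~\ref{lm vertex in interval}, with Theorem~\ref{decomposition 2} playing the role of Theorem~\ref{decomposition 1}. The core reduction is to show that $N[y]$ separates $x_0$ from $x$ in $G$; granted this, the monochromatic $c(x)$-path from $x_0$ to $x$ witnessing the conquest of $x$ must pass through some vertex of $N[y]$. Since $x \notin N[y]$ (otherwise $x$ and $y$ would be adjacent while sharing a colour, contradicting properness), that intersection vertex lies in the current territory and is either $y$ itself or a neighbour of $y$; combined with $c(y)=c(x)$, $y$ then enters the territory in the same round.

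To obtain the separation I would first apply Theorem~\ref{decomposition 2} to $B$ and $y \in B$, splitting $B-N[y]$ into the interval $I(x,y)$ and a collection of blocks at $y$, where each such block is a component of $G-N[y]$ disjoint from the component that contains $x$. If $x_0 \in N[y]$ the claim is immediate (either $x_0=y$ is trivial, or $x_0$ is adjacent to $y$ and delivers $y$ the moment $c(x)=c(y)$ is called). If $x_0$ belongs to a block at $y$, then $x_0$ and $x$ lie in distinct components of $G-N[y]$, so $N[y]$ separates them in $G$ and the proof of Lemma~\ref{lm vertex in interval} transfers verbatim.

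The delicate remaining case is $x_0 \in I(x,y)$. Here $\AT$-freeness applied to the independent triple $\{x, x_0, y\}$ --- using the path from $x_0$ to $y$ inside $B$ (so avoiding $N[x]$) and the path from $x_0$ to $x$ witnessed by $x_0 \in I(x,y)$ (avoiding $N[y]$) --- forces $N[x_0]$ to separate $x$ from $y$ in $G$. A second application of Theorem~\ref{decomposition 2}, this time to $B$ with the vertex $x_0 \in B$, then places $y$ in some block of $x_0$ contained in $B$: the alternative $y \in I(x, x_0)$ would, combined with $x_0 \in I(x,y)$, exhibit an asteroidal triple on $\{x, x_0, y\}$. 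That block lies in $B - N[x_0]$ and is therefore strictly smaller than $|B|$. Because $x_0$ is a global extreme, some block of $x_0$ must have cardinality at least $|B|$, and it must thus lie outside $B$; my plan is to derive a contradiction from this configuration by using the module structure of the set $X$ of extremes supplied by Lemma~\ref{lm extreme} together with the recursive definition of a global extreme. Turning this final step into a clean structural contradiction is the chief technical obstacle; once it is in place, the case $x_0 \in I(x,y)$ is excluded and the first two cases complete the proof.
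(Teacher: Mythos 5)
Your reduction is exactly the paper's: show that $N[y]$ separates $x_0$ from $x$ (or $x_0\in N[y]$), handle $x_0\in N[y]$ and ``$x_0$ in a block at $y$'' as in Lemma~\ref{lm vertex in interval}, and isolate $x_0\in I(x,y)$ as the only obstruction. But excluding $x_0\in I(x,y)$ \emph{is} the lemma --- it is the only place where the hypothesis that $x_0$ is a global extreme enters --- and you do not prove it. What you do derive in that case (AT-freeness forces $N[x_0]$ to separate $x$ from $y$; hence by Theorem~\ref{decomposition 2} the vertex $y$ lies in a block at $x_0$ contained in $B$, of size less than $|B|$, while the largest component $C$ of $G-N[x_0]$ has size at least $|B|$) uses only that $x_0$ is an \emph{extreme}, and these facts are not contradictory by themselves: take a large clique $C$, a vertex $S$ joined to all of $C$, and an induced path $p_1p_2p_3p_4p_5$ all of whose vertices are joined to $S$; this graph is $\AT$-free, $x_0=p_3$ is an extreme (though not a global one), and with $x=p_1$, $y=p_5$ of the same color one has $y,x_0$ in a common block at $x$, $x_0\in I(x,y)$, and every one of your intermediate conclusions --- yet conquering $x$ from $x_0$ does not conquer $y$. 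So the contradiction you ``plan'' to extract from the module $X$ and the recursive definition of a global extreme is not a finishing touch but the decisive missing argument; as written, the proposal is incomplete at exactly that step. (A small additional inaccuracy: the large component $C$ of $G-N[x_0]$ need not lie \emph{outside} $B$; one only gets that it is not contained in $B$, since it may meet $B$ inside $I(x,x_0)$.)

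For comparison, the paper closes this case by taking $C$ to be the largest component of $G-N[x_0]$ and distinguishing three situations: if $x,y\in C$ then $\{x,y,x_0\}$ is an asteroidal triple; if $x\in C$ and $y\notin C$ then Lemma~\ref{lm extreme} gives $N(C)\subseteq N[y]$, so $x_0$ and $y$ could not lie in a common component of $G-N[x]$, contradicting $x_0,y\in B$; and if neither $x$ nor $y$ is in $C$, the whole configuration survives in $G-C-N(C)$, where $x_0$ is again a global extreme lying in the interval between $x$ and $y$, and induction along the recursive definition of a global extreme yields the contradiction. You identified the correct ingredients (Theorem~\ref{decomposition 2}, AT-freeness, Lemma~\ref{lm extreme}, the recursion), but the case analysis that actually uses them is absent, so there is a genuine gap.
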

\begin{proof}
By Theorem~\ref{decomposition 2}, $G-N[y]$ partitions the component 
$B$ of $G-N[x]$ into an interval $I(x,y)$ and some blocks at $y$. 
The vertex $x_0 \notin I(x,y)$ since this would contradict that $x_0$ 
is a global extreme. 

\medskip 

\noindent
To see that, assume that $x_0 \in I(x,y)$ and let $C$ be the largest component 
of $G-N[x_0]$. If $x$ and $y$ were both in $C$ then $\{x,y,x_0\}$ 
would be an $\AT$. If $x \in C$ and $y \notin C$, then $x_0$ and 
$y$ cannot be in one component of $G-N[x]$, since $N(C) \subseteq N[y]$ 
by Lemma~\ref{lm extreme}. 
Finally, assume that neither $x$ nor $y$ is in $C$. Then $x_0$ is in 
the interval between $x$ and $y$ in $G-C-N(C)$. However, by induction, this 
is a contradiction, since $x_0$ is a global extreme in $G-C-N(C)$.

\medskip 

\noindent 
Thus,  the closed neighborhood of $y$ separates $x$ and 
$x_0$ or $x_0 \in N[y]$, 
which implies the claim, as in the proof of 
Lemma~\ref{lm vertex in interval}. 
\end{proof}

\bigskip 

\begin{theorem}
\label{thm order color class}
The vertices of a color class can be linearly ordered 
such that, conquering a vertex $x$ simultaneously conquers 
all the vertices in the color class 
that precede $x$ in the ordering.  
\end{theorem}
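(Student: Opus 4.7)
My plan is to show that the relation \(y \preceq x\) on the color class \(C_c\), defined by ``conquering \(x\) simultaneously conquers \(y\)'', is a total preorder; any linear extension then provides the desired ordering. Reflexivity is immediate, and transitivity follows at once from the interpretation: if, in every strategy, \(z\) is added to \(x_0\)'s territory no later than \(y\), and \(y\) no later than \(x\), then \(z\) is added no later than \(x\), so \(z \preceq x\).

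The main step is \emph{totality}: for any two distinct \(x_1,x_2\in C_c\), one of \(x_1\preceq x_2\) or \(x_2\preceq x_1\) holds. By the proper coloring, \(x_1\not\sim x_2\). If \(x_i\in N(x_0)\) for some \(i\), then \(x_i\) is absorbed the first time the shared color is called, so \(x_i\preceq x_{3-i}\) trivially. Otherwise, Lemmas~\ref{lm vertex in interval} and~\ref{lm vertex in block} supply \(x_j\preceq x_i\) whenever \(x_j\) lies in the component of \(x_0\) in \(G-N[x_i]\); it remains to rule out the residual configuration in which \(N[x_2]\) separates \(x_1\) from \(x_0\) \emph{and} \(N[x_1]\) separates \(x_2\) from \(x_0\). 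To do so I mirror the inductive argument of Lemma~\ref{lm vertex in block} and invoke the global-extreme structure. Set \(C\) to be the largest component of \(G-N[x_0]\), \(S=N(C)\), and \(X=V(G)\setminus(C\cup S)\); by Lemma~\ref{lm extreme}, \(x_0\in X\), \(S\subseteq N(x_0)\), and every \(X\)-vertex is adjacent to every \(S\)-vertex. Hence \(x_1,x_2\in C\cup(X\setminus N(x_0))\). If both lie in \(C\), the bridge \(x_0\to S\to C\) combined with a connecting path in \(C\) supplies the very paths the separating hypotheses seek to forbid, yielding an asteroidal triple \(\{x_0,x_1,x_2\}\) and contradicting \(\AT\)-freeness. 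If at least one lies in \(X\setminus N(x_0)\), the recursive clause of the global-extreme definition identifies \(x_0\) as a global extreme in the component of \(G[X]\) containing it, on which the separating hypotheses descend, and the claim follows by induction on the depth of the global-extreme recursion.

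The main obstacle will be the ``both in \(C\)'' subcase: the separating hypotheses explicitly block the obvious candidate paths, so the contradiction must be extracted from the maximality of \(|C|\) across all components of \(G-N[v]\)---the extra leverage granted by \emph{global} extremity rather than mere extremity of \(x_0\)---exactly as in Lemma~\ref{lm vertex in block}. Once totality is in place, any linear extension of \(\preceq\) witnesses the theorem.
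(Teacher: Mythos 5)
Your framing (show the relation ``conquering $x$ simultaneously conquers $y$'' is total on each color class, then take a linear extension) matches the paper's intent, and the cases you settle via adjacency to $x_0$ and via Lemmas~\ref{lm vertex in interval} and~\ref{lm vertex in block} are fine. The gap is the residual case: you propose to \emph{rule out} the configuration in which $N[x_1]$ separates $x_2$ from $x_0$ and $N[x_2]$ separates $x_1$ from $x_0$, but this configuration is perfectly realizable, even with $x_0$ a global extreme in an $\AT$-free, properly colored graph, so no contradiction exists to be found. Take $K_{1,3}$ with center $s$ and leaves $x_0,x_1,x_2$, with $x_1,x_2$ sharing a color: $x_0$ is a global extreme and each of $N[x_1]$, $N[x_2]$ separates the other leaf from $x_0$. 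For a counterexample in your ``both in $C$'' subcase, attach a pendant $x_0$ to a vertex $s$ of the $4$-cycle $s,x_1,w,x_2$: again $x_0$ is a global extreme, the graph is $\AT$-free, both $x_1,x_2$ lie in the largest component of $G-N[x_0]$, and the two separations hold simultaneously. As you yourself observe, $\{x_0,x_1,x_2\}$ is then \emph{not} an asteroidal triple, precisely because the separation hypotheses destroy two of the three required paths; the hoped-for leverage from maximality of $|C|$ does not exist.

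The missing idea is that this configuration must be handled positively, not excluded: one shows the two vertices fall together. Let $C'$ be the component of $G-N[x_1]$ containing $x_0$ and $S'=N(C')\subseteq N(x_1)$. If $x_2$ is nonadjacent to some $s\in S'$, then, since $x_2$ lies in a different component of $G-N[x_1]$ and hence has no neighbor in $C'$, an $x_0,x_1$-path with interior in $C'\cup\{s\}$ avoids $N[x_2]$, so $x_1$ is in the $x_0$-block at $x_2$ and Lemma~\ref{lm vertex in block} applies after all. Otherwise $x_2$ is adjacent to all of $S'$; since every path from $x_0$ to $x_1$ must cross $S'$, the path witnessing the conquest of $x_1$ contains a neighbor of $x_2$, and the argument of Lemma~\ref{lm vertex in interval} shows $x_2$ is conquered no later than $x_1$. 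With this replacement for your residual case the proof goes through; the impossibility route cannot be repaired.
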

\begin{proof}
Let $x$ and $y$ be two vertices of the same color. 
We show that the order in which $x$ and $y$ are conquered 
depends on the graph, and not on the conquering strategy
of $x_0$. 

\medskip 

\noindent 
If $y$ is in the block at $x$ that contain $x_0$, 
or if $y$ is between $x$ and $x_0$, the conquering 
of $x$ simultaneously conquers $y$ by 
Lemmas~\ref{lm vertex in interval} and~\ref{lm vertex in block}. 

\medskip 

\noindent 
If one of $x$ and $y$, say $x$, 
is adjacent to $x_0$ then it is easily perceived that  
conquering $y$ simultaneously conquers $x$. 
Assume that $x$ is not adjacent to 
$x_0$ and let $C$ be the component of $G-N[x]$ that contains $x_0$. Denote 
\[S=N(C) \quad \text{then $S \subseteq N(x)$.}\]
We may assume that $y \notin C$. If $y$ is 
not adjacent to all vertices of $S$, then there is 
a $x_0,x$-path that avoids $N[y]$, that is, 
$x$ is in the block at $y$ that contains $x_0$. 

\medskip 

\noindent 
Assume that $y$ is adjacent to all vertices of $S$. 
Then, conquering one of $x$ or $y$ simultaneously 
conquers the other. 
\end{proof}

\bigskip 

\begin{figure}
\setlength{\unitlength}{.1cm}
\begin{center}
\begin{picture}(50,35)
\put(5,20){\circle*{1.3}}
\put(20,10){\circle*{1.3}}
\put(20,30){\circle*{1.3}}
\put(35,10){\circle*{1.3}}
\put(35,30){\circle*{1.3}}
\put(50,20){\circle*{1.3}}
\put(5,20){\line(3,-2){15}}
\put(5,20){\line(3,2){15}}
\put(20,10){\line(0,1){20}}
\put(20,10){\line(1,0){15}}
\put(20,30){\line(1,0){15}}
\put(50,20){\line(-3,-2){15}}
\put(50,20){\line(-3,2){15}}
\put(35,7){$y$}
\put(35,32){$x$}
\put(2,17){$x_0$}
\end{picture}
\end{center}
\caption{This graph contains an $\AT$, namely 
$\{\;x_0,\;x,\;y\;\}$.}
\end{figure}
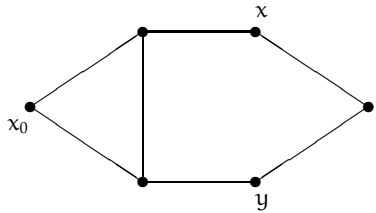

\bigskip 

\begin{theorem}
\label{thm global extreme}
The {\sc solitaire flood-it} game starting at an 
extreme vertex can be solved in 
polynomial time on $\AT$-free graphs. 
\end{theorem}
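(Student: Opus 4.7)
The plan is to convert Theorem~\ref{thm order color class} into a polynomial-time dynamic program. By the corollary immediately following the edge-contraction lemma, I may assume without loss of generality that the coloring is proper. Theorem~\ref{thm order color class} then provides, for each color class $i$, an intrinsic linear order $v_i^{(1)}, \ldots, v_i^{(n_i)}$ with the property that, under \emph{any} play of the game, the vertices of color $i$ already in the territory always form a prefix of this order. Hence the configuration of the game at any moment is fully encoded by a tuple $\pi=(p_1,\ldots,p_k)$ of prefix lengths; the initial configuration encodes $x_0$'s starting monochromatic territory, and the terminal one is $(n_1,\ldots,n_k)$. Calling a color $c$ sends $\pi$ to a uniquely determined successor $\pi'$, computable in linear time by a single breadth-first search from the current territory within the subgraph induced by color $c$ together with the territory.

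The algorithm is then a shortest-path computation in the configuration DAG: the minimum number of moves equals the length of a shortest directed path from the initial to the terminal configuration. Correctness is immediate from Theorem~\ref{thm order color class}, since the successor of $\pi$ under a color call depends only on $\pi$ and the color, and not on the history by which $\pi$ was reached.

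The main obstacle, and the place where the hypothesis that $x_0$ is a global extreme is critical, is bounding the number of \emph{reachable} configurations by a polynomial in $n$. A priori this number is $\prod_i (n_i+1)$, which is exponential when the number of colors is unbounded. To compress the state space I would invoke Theorems~\ref{decomposition 1} and~\ref{decomposition 2} recursively from $x_0$, yielding a rooted tree of intervals and blocks. A reachable territory is then naturally described not by per-color prefix lengths but by a frontier in this decomposition tree, and a counting argument in the spirit of Fleischer and Woeginger~\cite{kn:fleischer} should show that only polynomially many such frontiers can arise. Once this compressed state space is in hand, breadth-first search over it computes the optimum in polynomial time, establishing the claimed result.
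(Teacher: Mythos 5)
Your reduction of the game state to a tuple of per-color prefix lengths is sound (Theorem~\ref{thm order color class} does give that the conquered vertices of each color always form a prefix, and the successor under a color call is determined by the territory alone), but the proof then stands or falls on the claim that only polynomially many configurations are \emph{reachable}, and that claim is exactly what you do not establish. You acknowledge the state space is a priori of size $\prod_i(n_i+1)$ and propose that a ``frontier'' description in the interval/block decomposition tree ``should'' admit a polynomial counting argument --- but no such argument is given, and it is not at all clear one exists: different interleavings of color calls can advance the prefixes of different color classes to different relative depths, so nothing in Theorems~\ref{decomposition 1} and~\ref{decomposition 2} by themselves caps the number of reachable prefix tuples by a polynomial. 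The hypothesis that $x_0$ is a global extreme is never actually used in your argument beyond invoking Theorem~\ref{thm order color class}; the place where it must do real work (collapsing the state space) is precisely the missing step.

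The paper avoids configuration search entirely. It uses the extreme-vertex structure to produce a dominating pair $\{x_0,x\}$, with $\Omega=V\setminus(D\cup\Delta)$ the ``far side'' containing the maximal element $\Max(c)$ of every color occurring there, and sets $M=\{\Max(c)\mid c\in C\}$. Giving the vertices of $M$ weight $0$ and all other vertices weight $1$, it shows the optimum equals $k+\Opt$, where $k$ is the number of colors (each color must be called at least once) and $\Opt$ is the cheapest weighted cost of reaching $\Omega^{\ast}=\Omega\cap M$ from $x_0$: any $x_0,x$-path can be augmented to a caterpillar by hanging the remaining vertices of $M$ as far from $x_0$ as possible, and calling colors along the path, picking up $M$-vertices as they are met, conquers everything. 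So the whole problem reduces to one $0/1$-weighted shortest-path computation, which is where the polynomial bound comes from. To repair your write-up you would either have to supply the missing polynomial bound on reachable frontiers (a nontrivial structural claim you have not proved) or switch to this kind of direct path-plus-caterpillar argument, which is what the global-extreme hypothesis is really for.
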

\begin{proof}
By Theorem~\ref{thm order color class}, each color class 
can be linearly ordered such that conquering a vertex implies 
the conquest of all preceding vertices in the same color class. 
Assume all color classes have been linearly ordered like that and 
denote by $\Max(c)$ the maximal element of color $c$.
Let $C$ be the set of colors and let 
\[M=\{\;\Max(c)\;|\; c \in C\;\}.\] 

\medskip 

\noindent 
Consider a vertex $x$ for which the cardinality of the component 
of $G-N[x]$ that contains the source $x_0$ is as large as possible. 
Let $D$ be the component. Let 
\[\Delta=N(D) \quad \text{and}\quad \Omega=V \setminus (D \cup \Delta).\]
Then $x \in \Omega$ and all vertices of $\Omega$ are adjacent to 
all vertices of $\Delta$. 
Notice that $\{x_0,x\}$ is a dominating pair, that is, each path running 
between $x_0$ and $x$ is a dominating set. 

\medskip 

\noindent
If a color $c$ appears at least once in $\Omega$, then 
$\Max(c) \in \Omega$. 
Let $\Omega^{\ast}=\Omega \cap M$. 

\medskip 

\noindent 
Give vertices of $M$ a weight zero and all other vertices a weight 1. 
Let $\Opt$ denote the cost of a cheapest path to reach at least one 
vertex in $\Omega^{\ast}$.  
Then the optimal solution to solve the game has a cost 
$\Opt +k$, where $k=|C|$ is the number of colors. To see that, consider 
any $x_0,x$-path $P$. Add each vertex of $M$, which is not already 
in the path, to $P$ at a maximal distance 
from $x_0$; this constructs a caterpillar. The strategy which adds the colors 
to the territory of $x_0$ in the order of $P$, adding the vertices of $M$ 
when they are met, eventually adds all vertices to the territory. 
This proves the claim. 
\end{proof}

\section{An algorithm for {\sc Flood-It} on $\AT$-free graphs}

\begin{notation}
Let $\alpha$ and $\omega$ be two vertices such that the interval 
$I(\alpha,\omega)$ contains 
a maximal number of vertices. Let $C_{\alpha}(\omega)$ denote the 
component of $G-N[\alpha]$ that contains $\omega$ and define 
$C_{\omega}(\alpha)$ similarly. Let 
\begin{align*}
S_{\alpha}&=N(C_{\alpha}(\omega)) \quad   
&& A =V(G) \setminus (S_{\alpha} \cup C_{\alpha}(\omega)) & \quad\text{and} \\ 
S_{\omega} &=N(C_{\omega}(\alpha))  \quad   
&& \Omega = V(G) \setminus (S_{\omega} \cup C_{\omega}(\alpha)). &  
\end{align*}
\end{notation}
Notice that, possibly $S_{\alpha} \cap S_{\omega} \neq \es$.

\bigskip 

\begin{lemma}
Under the restrictions set out above, one may choose $\alpha$ such that 
all vertices of $A$ are adjacent to all vertices of $S_{\alpha}$. 
\end{lemma}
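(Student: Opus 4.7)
I will argue by an extremal exchange: among all non-adjacent pairs $(\alpha,\omega)$ with $|I(\alpha,\omega)|$ maximum, pick one that additionally maximises $|C_\alpha(\omega)|$, and show that this $\alpha$ has the desired property. Suppose for contradiction there exist $a\in A$ and $s\in S_\alpha$ with $a\not\sim s$.

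Two routine observations set up the exchange. First, $S_\alpha\subseteq N(\alpha)$: any vertex outside $C_\alpha(\omega)$ that is adjacent to it must lie in $N[\alpha]$, else it would join $C_\alpha(\omega)$ in $G-N[\alpha]$; so $s\sim\alpha$. Second, because $a\in A$ has no neighbour in $C_\alpha(\omega)$ (otherwise $a\in S_\alpha$), the set $C_\alpha(\omega)\cup\{s\}$ is connected and lies entirely in $G-N[a]$, so $C_\alpha(\omega)\cup\{s\}\subseteq C_a(\omega)$. If additionally $a\not\sim\alpha$, then the same path through $s$ extends to $\alpha$ and gives $\alpha\in C_a(\omega)$ as well. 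Either way $|C_a(\omega)|>|C_\alpha(\omega)|$.

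To reach the contradiction it now suffices to show $|I(a,\omega)|\ge |I(\alpha,\omega)|$: combined with the maximality of $|I(\alpha,\omega)|$, this makes $(a,\omega)$ a max-interval pair with strictly larger $C_{(\cdot)}(\omega)$, contradicting the tie-breaker. I would prove $I(\alpha,\omega)\subseteq I(a,\omega)$. For $z\in I(\alpha,\omega)$, the containment $z\in C_a(\omega)$ is immediate from the preceding paragraph. The other containment $z\in C_\omega(a)$ reduces to showing $a\in C_\omega(\alpha)$, since then $z$ and $a$ share the component $C_\omega(\alpha)$ of $G-N[\omega]$.

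Everything therefore hinges on proving $a\in C_\omega(\alpha)$. When $a\sim\alpha$ the edge $\{a,\alpha\}$ already lies in $G-N[\omega]$ (recall $a\not\sim\omega$), and we are done. The hard case is $a\not\sim\alpha$: here the tempting attempt to exhibit the asteroidal triple $\{a,\alpha,\omega\}$ fails, because $N[\alpha]$ already separates $a$ from $\omega$, so no $a$-$\omega$ path avoiding $N[\alpha]$ exists. I would resolve this case indirectly, by invoking the decomposition theorems (Theorems~\ref{decomposition 1} and~\ref{decomposition 2}) together with $\AT$-freeness: assuming $a\notin C_\omega(\alpha)$, pick a vertex $\omega'$ deep inside the component of $G-N[\omega]$ that contains $a$ and argue that $|I(\alpha,\omega')|>|I(\alpha,\omega)|$, contradicting the maximality of $|I(\alpha,\omega)|$ directly. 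Controlling the growth of the interval under this replacement, via the decomposition of $I(\alpha,\omega')$ into subintervals and blocks, is the main technical hurdle I expect to face.
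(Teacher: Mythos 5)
Your exchange setup is the same as the paper's own: assume $a\in A$ misses some $s\in S_\alpha$, observe $C_\alpha(\omega)\cup\{s\}\subseteq C_a(\omega)$, and try to conclude $I(\alpha,\omega)\subseteq I(a,\omega)$ so that the replacement (your tie-breaker, the paper's ``continuing the process'') goes through. The step you honestly flag as unresolved --- proving $a\in C_\omega(\alpha)$ when $a\not\sim\alpha$ --- is a genuine gap, and it is exactly the point the paper passes over with ``it follows that $I(\alpha,\omega)\subseteq I(a,\omega)$''. Worse, the missing claim is not merely hard: it can fail. Take the path $\alpha,p,z,q,\omega$, add a vertex $u$ adjacent to all five of these vertices, and a vertex $a$ adjacent only to $u$. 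This graph is $\AT$-free ($u$ is universal, so an asteroidal triple and its witnessing paths must avoid $u$, and what remains is a $P_5$ plus an isolated vertex), and its unique maximum interval is $I(\alpha,\omega)=\{z\}$. Here $C_\alpha(\omega)=\{z,q,\omega\}$, $S_\alpha=\{p,u\}$, $A=\{\alpha,a\}$ and $a\not\sim p$; moreover $a$ is isolated in $G-N[\omega]$, so $a\notin C_\omega(\alpha)$ and in fact $I(a,\omega)=\varnothing$, so the containment $I(\alpha,\omega)\subseteq I(a,\omega)$ fails outright and the exchange does not preserve maximality.

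The same example blocks your proposed repair of the hard case: the component of $G-N[\omega]$ containing $a$ is $\{a\}$ itself, and $I(\alpha,a)=\varnothing$, so no replacement $\omega'$ inside it enlarges the interval. By symmetry ($\Omega=\{\omega,a\}$, $S_\omega=\{q,u\}$, $a\not\sim q$), neither orientation of the unique maximum pair satisfies the conclusion, so the lemma cannot be established in the form stated: any correct argument must weaken the requirement that the interval stay maximum under the exchange (note that after replacing $\alpha$ by $a$ the adjacency condition does hold, but $I(a,\omega)$ is no longer maximum), or else exclude vertices such as $a$ from $A$. So your instinct about where the difficulty lies is exactly right, but the plan as outlined cannot be completed; the obstacle is inherited from the paper's own one-line inference rather than introduced by you.
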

\begin{proof}
Assume that there exists a vertex $a \in A$ such that $a$ is not 
adjacent to some vertex $\epsilon \in S_{\alpha}$. Then, the component 
of $G-N[a]$ which contains $\omega$ contains $C_{\alpha}(\omega) \cup \{\epsilon\}$. 
It follows that 
\[I(\alpha,\omega) \subseteq I(a,\omega).\] 
Continuing the process proves the claim. 
\end{proof}
Henceforth, we assume that $\alpha \in A$ 
and that all vertices of $A$ are adjacent to all 
vertices of $S_{\alpha}$ and, similarly, $\omega \in \Omega$ 
and all vertices of $\Omega$ are adjacent to all vertices of $S_{\omega}$. 

\bigskip 

A strategy is a sequence of colors, called by the source vertex $v_0$, which 
ultimately adds all vertices to its territory. 
For the case where $v_0$ is in one of the sets $A$, $\Omega$, 
$S_{\alpha}$ or $S_{\omega}$, the analysis is similar to where $v_0$ is an 
extreme. So, in this section we concentrate on the case where 
\[ \boxed{v_0 \in I(\alpha,\omega).} \] 
Assuming that is the case, $N[v_0]$ separates $\alpha$ and $\omega$ and, 
by Theorem~\ref{thm order color class}, 
each color class is partitioned into two linearly ordered 
sets. In other words, each color 
class has a `maximal' and a `minimal' element. The minimal element 
is the last one visited on the path from $v_0$ that contains some element of $A$ 
and 
the maximal element is the last one visited on the path from $v_0$ 
that contains some element of $\Omega$.  

Notice that $A$ may contain more than one element of some color, but conquering 
one of them, conquers them all. To see that, observe  
that any strategy induces a path from $A$ to 
$\Omega$. The path passes through $S_{\alpha}$ and 
the join between $A$ and $S_{\alpha}$ 
proves the claim. 
Contrary to cocomparability graphs, the sets $A$ and 
$\Omega$ need not be cliques. 

\bigskip 

Analogous the Fleischer \& Woeginger's stratagem for cocomparability graphs, 
we define the essential length of a strategy as follows. 

\begin{definition}
The length of a strategy $\gamma$ is the number of colors in it. 
The essential length of $\gamma$  is the length minus the number of 
steps where the \underline{second} extremal vertex of some color 
class is conquered. 
\end{definition}

For a strategy $\gamma$, let $|\gamma|$ denote its length 
and let $\ess(\gamma)$ denote its essential length. 
Let $\Opt$ denote the solution of the solitair {\sc Flood-It} game, 
then we have 
\[\Opt=\min \;\{\; \ess(\gamma)+k\;|\; \text{$\gamma$ is a strategy}\;\},\]
where $k$ is the number of colors. 

\bigskip 

\begin{notation}
For a vertex $x$, let $\Min_x(c)$ denote the minimal element of color $c$ 
which is adjacent to $x$ and let $\Max_x(c)$ denote the maximal element of color 
$c$ which is adjacent to $x$. 
\end{notation}

\begin{notation}
For a pair of vertices $x$ and $y$ let $D(x,y)$ denote the essential length 
of a strategy that conquers $x$ and $y$. We set $D(v_0,v_0)=0$. 
\end{notation}

\begin{definition}
Two vertices $x$ and $x^{\prime}$ that are not adjacent to $v_0$
are incomparable  
if  
the component of $G-N[x^{\prime}]$ that contains $v_0$ is the same 
as the component of $G-N[x]$ that contains $v_0$. 
\end{definition}

\begin{theorem}
Let $x$ be a vertex in a component of $G-N[v_0]$ which contains some vertex of $A$ 
and let $y$ be a vertex in a component of $G-N[v_0]$ which contains a vertex of $\Omega$. 
Then 
\[D(x,y)=\min\; \{\; D(x,\Min_y(c))+\delta_y(x),\; D(\Max_x(c),y)+\delta_x(y) \;|\; 
\text{$c$ is a color}\;\},\]
where $\delta_x(y)=0$ if $y$ is the maximal element of some color $c$ and the 
minimal element is not in $I(x,y)$, or adjacent to $x$, or incomparable to $x$ or $y$ 
or both and
$\delta_x(y)=1$ otherwise. In other words, $\delta_x(y)=0$ if $y$ is a maximal 
element of a color class and the minimal element is conquered earlier 
or in the same step. 
\end{theorem}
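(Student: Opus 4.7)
The plan is to establish the recurrence by proving both inequalities, using the fact that any strategy conquering $x$ and $y$ must, in its final move, absorb one of the two, and that the penultimate territory contains a neighbor of the last-absorbed vertex.

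For the upper bound $D(x,y) \le$ RHS, fix a color $c$ and take an optimal strategy $\gamma$ realizing $D(x,\Min_y(c))$. Since $\Min_y(c)$ is by definition a neighbor of $y$, appending a single call of color $c(y)$ to $\gamma$ produces a valid strategy conquering both $x$ and $y$. The essential-length cost of this extra step is exactly $\delta_y(x)$: it vanishes precisely when $y$ is a second extremal of its color class, so that $c(y)$ has already been called to absorb the first extremal and no new essential length is incurred; otherwise the cost is $1$. A symmetric construction, starting from an optimal strategy for $D(\Max_x(c),y)$ and appending a call of $c(x)$, yields the second term of the minimum.

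For the lower bound $D(x,y) \ge$ RHS, take an optimal strategy $\gamma$ realizing $D(x,y)$ and inspect its last move. Without loss of generality the last move absorbs $y$ by calling color $c(y)$; just before it, some neighbor $w$ of $y$ of some color $c$ lies in the territory. Applying Theorem~\ref{thm order color class} to the partition of the color class of $w$ induced by $N[v_0]$, the fact that $w$ is conquered forces $\Min_y(c)$ to be conquered as well, since $\Min_y(c)$ lies no later than $w$ in the relevant chain and the ordering propagates downward. Hence $\gamma$ with its final call deleted is a strategy conquering $x$ and $\Min_y(c)$ whose essential length is $\ess(\gamma) - \delta_y(x)$, so $D(x,y) \ge D(x,\Min_y(c)) + \delta_y(x)$. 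The symmetric case where $x$ is absorbed last yields the other term.

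The main obstacle is making the $\delta$-bookkeeping tight. One must verify that each of the four conditions composing $\delta_x(y) = 0$ --- the minimum of the relevant color class lies outside $I(x,y)$, is adjacent to $x$, is incomparable to $x$, or is incomparable to $y$ --- corresponds exactly to the situations in which the final call of $c(y)$ does not increase essential length, i.e., both extremals of the color class are absorbed at or before that step. This case analysis leans on Lemmas~\ref{lm vertex in interval} and~\ref{lm vertex in block}, together with the join structure between $A$ and $S_\alpha$ (symmetrically on the $\Omega$-side), to show that each listed configuration forces simultaneous absorption of the minimum; the rest of the argument is a direct adaptation of the extreme-vertex analysis from Theorem~\ref{thm global extreme}.
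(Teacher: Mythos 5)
The skeleton you chose---upper bound by appending one call of $c(y)$ to an optimal strategy for the pair $(x,\Min_y(c))$, lower bound by truncating an optimal strategy at its last move and using Theorem~\ref{thm order color class} to pass from the conquered neighbour $w$ to $\Min_y(c)$---is the natural way to prove a recurrence of this shape, and it is in fact more explicit than the paper's own proof, which consists of three sentences: it invokes the analogy with Fleischer and Woeginger, lets $\alpha\in A$ and $\omega\in\Omega$ be the first two extremal vertices conquered by an optimal strategy, and records only the inequality $|\gamma|\ge D(\alpha,\omega)+k$; the recurrence itself and the $\delta$-conditions are not verified there at all.

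Nevertheless your proposal has a genuine gap, and it sits exactly where you yourself put ``the main obstacle''. Beyond the routine append/truncate exchange, the entire content of the theorem is that the dynamic condition ``the minimal element of the class is conquered earlier or in the same step'' can be replaced by the static, precomputable disjunction defining $\delta_x(y)$ (minimal element not in $I(x,y)$, or adjacent to $x$, or incomparable to $x$ or to $y$); you assert that this ``leans on Lemmas~\ref{lm vertex in interval} and~\ref{lm vertex in block}'' and is ``a direct adaptation'' of Theorem~\ref{thm global extreme}, but you never carry out the case analysis, so both of your inequalities identify the cost of the last step with $\delta$ only by fiat. Three concrete points are swept under this rug. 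First, you write that the appended call of $c(y)$ costs $\delta_y(x)$ but justify it by the condition that defines $\delta_x(y)$ ($y$ being the second extremal of its class to be conquered); the statement pairs $D(x,\Min_y(c))$ with $\delta_y(x)$, so you must either prove it as literally written or argue explicitly that the subscripts are to be read the other way---you silently do the latter. Second, in the lower bound, deleting the final call yields a strategy conquering $x$ and $\Min_y(c)$ only when $x$ was conquered strictly before that call; if $x$ and $y$ are absorbed in the same step the truncated sequence conquers neither, and this case needs its own argument. Third, the last step adds only vertices of colour $c(y)$, so it can be non-essential because it conquers the second extremal of that class through a vertex other than $y$, in which case the step costs $0$ while $\delta=1$ and your subtraction $\ess(\gamma)-\delta$ is no longer valid; likewise ``$w$ conquered forces $\Min_y(c)$ conquered'' needs the extra observation that $\Min_y(c)$ lies either on the same chain as $w$ or in $N(v_0)$. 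Until the $\delta$-equivalence and these boundary cases are argued, what you have is a proof outline rather than a proof---though, to be fair, the same can be said of the paper.
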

As usual, we let $\infty$ be the minimal value of a set which is empty, 
so, for example, if $y$ is not adjacent to a vertex of color $c$, then 
we let $D(x,\Min_y(c))=\infty$. 
\begin{proof}
In analogy to the proof in~\cite{kn:fleischer}, let $\gamma$ be an optimal 
strategy and let $\alpha \in A$ and $\omega \in \Omega$ be the first two 
extremal vertices that are conquered. 
After the conquest of $\alpha$ and $\omega$, only the remaining maximal elements 
need to be conquered. Thus $|\gamma|\geq D(\alpha,\omega)+k$. 
\end{proof}

\section{Concluding remarks}

We have shown that the solitaire {\sc Flood-It} game can be solved in 
polynomial time on $\AT$-free graphs.  A graph is a hereditary dominating 
pair graph if each of its connected induced subgraphs has a dominating 
pair~\cite{kn:przulj}. 
$\AT$-free graphs are hereditary dominating pair graph. That the latter is 
actually a larger class of graphs is exemplified by $C_6$. 
As far as we know, the recognition of hereditary 
dominating pair graphs is still open.  An interesting open question is whether 
solitaire {\sc Flood-It} remains polynomial for this class of graphs.

\end{document}